% ****** Start of file apssamp.tex ******
%
%   This file is part of the APS files in the REVTeX 4.2 distribution.
%   Version 4.2a of REVTeX, December 2014
%
%   Copyright (c) 2014 The American Physical Society.
%
%   See the REVTeX 4 README file for restrictions and more information.
%
% TeX'ing this file requires that you have AMS-LaTeX 2.0 installed
% as well as the rest of the prerequisites for REVTeX 4.2
%
% See the REVTeX 4 README file
% It also requires running BibTeX. The commands are as follows:
%
%  1)  latex apssamp.tex
%  2)  bibtex apssamp
%  3)  latex apssamp.tex
%  4)  latex apssamp.tex
%
\documentclass[%
 reprint,
%superscriptaddress,
%groupedaddress,
%unsortedaddress,
%runinaddress,
%frontmatterverbose, 
%preprint,
%preprintnumbers,
%nofootinbib,
%nobibnotes,
%bibnotes,
 amsmath,amssymb,
 aps,
%pra,
%prb,
%rmp,
%prstab,
%prstper,
%floatfix,
]{revtex4-1}

\usepackage{color}
\usepackage[euler]{textgreek}
\usepackage{kotex}
\usepackage{graphicx}% Include figure files
\usepackage{dcolumn}% Align table columns on decimal point
\usepackage{bm}% bold math
%\usepackage{hyperref}% add hypertext capabilities
%\usepackage[mathlines]{lineno}% Enable numbering of text and display math
%\linenumbers\relax % Commence numbering lines
\newtheorem{theorem}{Theorem}
\newtheorem{definition}{Definition}
\newtheorem{example}{Example}

\newtheorem{proposition}{Proposition}
\newenvironment{proof}[1][Proof]{\textbf{#1.} }{\  \rule{0.5em}{0.5em}}
\makeatletter
\def \@removefromreset#1#2{\let \@tempb \@elt
\def \@tempa#1{@&#1}\expandafter \let \csname @*#1*\endcsname \@tempa
\def \@elt##1{\expandafter \ifx \csname @*##1*\endcsname \@tempa \else
\noexpand \@elt{##1}\fi}     \expandafter \edef \csname cl@#2\endcsname{\csname cl@#2\endcsname}     \let \@elt \@tempb
\expandafter \let \csname @*#1*\endcsname \@undefined}

\@removefromreset{equation}{section}

\@removefromreset{theorem}{section}
\makeatother
%\usepackage[showframe,%Uncomment any one of the following lines to test 
%%scale=0.7, marginratio={1:1, 2:3}, ignoreall,% default settings
%%text={7in,10in},centering,
%%margin=1.5in,
%%total={6.5in,8.75in}, top=1.2in, left=0.9in, includefoot,
%%height=10in,a5paper,hmargin={3cm,0.8in},
%]{geometry}

\begin{document}

\preprint{APS/123-QED}

\title{Enhanced Optimal Quantum Communication \\ by Generalized Phase Shift Keying Coherent Signal }% Force line breaks with \\

\author{Min Namkung}
\author{Jeong San Kim}%
 \email{freddie1@khu.ac.kr}
\affiliation{%
 Department of Applied Mathematics and Institute of Natural Sciences, Kyung Hee University, Yongin 17104, Republic of Korea
}%

\date{\today}% It is always \today, today,
             %  but any date may be explicitly specified

\begin{abstract}
It is well known that the maximal success probability of the binary quantum communication can be improved by using a sub-Poissonian non-standard coherent state as an information carrier. In the present article, we consider the quantum communication with $N$-ary phase shift keying ($N$-PSK) signal for an arbitrary positive integer $N>1$. By using non-standard coherent state, we analytically provide the maximal success probability of the quantum communication with $N$-PSK. Unlike the binary case, we show that even super-Poissonianity of non-standard coherent state can improve the maximal success probability of $N$-PSK quantum communication.
\end{abstract}

%\keywords{Suggested keywords}%Use showkeys class option if keyword
                              %display desired
\maketitle

%\tableofcontents

\section{Introduction}
In optical communication, a sender encodes a message in an optical signal and sends it to a receiver who detectes the signal to decode the  message \cite{g.cariolaro}. Thus, the success probability of the optical communication is determined by the physical and statistical properties of the optical signal together with the structure of the receiver's measurement device. In classical optical communication, the receiver can use an on-off detector to decode a sender's message encoded in on-off keying signal \cite{c.w.helstrom,k.tsujino}, and a homodyne detector for binary phase shift keying signal \cite{j.g.proakis}. However, the maximal success probability for decoding encoded messages by using \textit{conventional measurements} such as the on-off and the homodyne detectors cannot exceed the standard quantum limit.

One of the goals in quantum communication is to design a novel measurement so that the maximal success probability to decode messages can surpass standard quantum limit \cite{i.a.burenkov}. According to the quantum theory, optical signal is described as a density operator on a Hilbert space and a measurement is descrived as a positive-operator-valued measure (POVM), therefore the quantum communication is described as a quantum state discrimination protocol \cite{s.m.barnett,j.a.bergou}. 

Minimum error discrimination \cite{j.bae,d.ha} is one representative state discrimination strategy used in various quantum communication protocols. When one bit message is encoded by binary coherent states, minimum error discrimination between the binary coherent states can be implemented via the Dolinar receiver \cite{s.j.dolinar}. However, when several bits are encoded and sequentially sent, the photon number detector used for the discrimination may not efficiently react along the received states \cite{i.a.burenkov}. For this reason, $N$-ary coherent states such as $N$-amplitude shift keying ($N$-ASK) signal \cite{c.w.helstrom} and $N$-phase shift keying ($N$-PSK) signal \cite{j.g.proakis} have been considered to send $\log_2N$ bit messages.

According to a recent work \cite{e.m.f.curado}, the maximal success probability (or Helstrom bound) of discriminating a message encoded in 2-PSK signal composed of \textit{non-standard coherent states (NS-CS)} with a novel quantum measurement can be improved by the sub-Poissonianity of the NS-CS. Moreover, the experimental method for implementing the quantum measurement reaching for the Helstrom bound has recently been proposed \cite{m.namkung}. Since the negative Mandel parameter to quantify the sub-Poissonianity is considered as a resource in a non-classical light \cite{s.dey}, this result implies that the sub-Poissonianity can be a resource for improving the performance of the quantum communication. 

In the present article, we consider the quantum communication with $N$-PSK signal for arbitrary an arbitrary positive integer $N>1$. By using non-standard coherent state, we analytically provide the maximal success probability of the quantum communication with $N$-PSK. Unlike the binary case, we show that even super-Poissonianity of non-standard coherent state can improve the maximal success probability of $N$-PSK quantum communication: The Helstrom bound can be improved by considering the sub-Poissonian NS-CS for $N=3$, meanwhile the super-Poissonian NS-CS can improve the Helstrom bound for $N=4$ and $N=8$.

For $N>2$, $N$-PSK signal enables us to transmit a $\log_{2}N$-bit message per a signal pulse, which is a better information exchange rate than binary-PSK. Moreover it is also known that $N$-PSK signal can provide an improved information exchange rate between the sender and receiver even though the receiver's measurement is slow \cite{i.a.burenkov}. However, the maximal success probability of discriminating a message encoded in $N$-PSK signal generally decreases as $N$ is getting large. Thus our results about the possible enhancement of the maximal success probability in $N$-PSK quantum communication by NS-CS is important and even necessary to design efficient quantum communication schemes.

The present article is organized as follows. In Section 2, we briefly review the problem of minimum error discrimination among $N$ symmetric pure states. In Section 3, we provide the analytical Helstrom bound of $N$-PSK signal composed of NS-CS. In Section 4, we investigate the Helstrom bound of $N$-PSK signal composed of optical spin coherent states (OS-CS), Perelomov coherent states (P-CS), Barut-Girardello coherent states (BG-CS) and modified Susskind-Glogower coherent states (mSG-CS), and discuss the relation between the sub-Poissonianity of the non-classical light and the performance of the $N$-PSK quantum communication. Finally, in Section 5, we propose the conclusion of the present article.

\section{Preliminaries: Minimum Error Discrimination among Symmetric Pure States} 

In quantum communication, Alice (sender) prepares her message $x\in\{1,\cdots,N\}$ with a prior probability $q_x\in\{q_1,\cdots,q_N\}$, encodes the message in a quantum state $\rho_x\in\{\rho_1,\cdots,\rho_N\}$, and sends the quantum state to Bob (receiver). Bob performs a quantum measurement described as a POVM $\{M_1,\cdots,M_N\}$ to discriminate the encoded message. In the POVM, $M_x$ is a POVM element with respect to a result $x$.  

For a given ensemble $\mathcal{E}=\{q_x,\rho_x\}_{x=1}^{N}$ of Alice and a POVM $\mathcal{M}=\{M_x\}_{x=1}^{N}$ of Bob, the success probability of the quantum communication between Alice and Bob is described by the success probability of the state discrimination,
\begin{equation}
P_s(\mathcal{E},\mathcal{M})=\sum_{x=1}^{N}q_x\mathrm{tr}\left\{\rho_xM_x\right\},\label{success_probability}
\end{equation}
One way to optimize the efficiency of quantum communication is to consider a POVM that maximizes the success probability in Eq. (\ref{success_probability}). In this case, the maximization of the success probability in Eq. (\ref{success_probability}) is equivalent to the minimization of the error probability
\begin{equation}
P_e(\mathcal{E},\mathcal{M})=1-P_s(\mathcal{E},\mathcal{M})=\sum_{x=1}^{N}\sum_{y\not=x}q_x\mathrm{tr}\left\{\rho_xM_y\right\}.\label{error_probability}
\end{equation}
\textit{Minimum error discrimination} is to minimize the error probability in Eq. (\ref{error_probability}) over all possible POVMs $\mathcal{M}$ of Bob.

For a given ensemble $\mathcal{E}$, it is known that the following inequality is a necessary and sufficient condition for POVM $\mathcal{M}$ minimizing the error probability \cite{c.w.helstrom,s.m.barnett2},
\begin{eqnarray}
\sum_{z=1}^{N}q_z\rho_zM_z-q_x\rho_x\ge 0, \ \ \forall x\in\{1,\cdots,N\}.\label{inequality_condition}
\end{eqnarray}
Moreover, it is known that the following equality is a useful necessary condition to characterize the structure of the POVM,
\begin{eqnarray}
M_x(q_x\rho_x-q_y\rho_y)M_y=0, \ \ \forall x,y\in\{1,\cdots,N\}.\label{equality_condition}
\end{eqnarray}
If every quantum state $\rho_x$ is pure, that is, $\rho_x=|\psi_x\rangle\langle\psi_x|$, the optimal POVM is given by a rank-1 projective measurement \cite{c.w.helstrom}. In other word, $M_x=|\pi_x\rangle\langle\pi_x|$ for every $x\in\{1,\cdots,N\}$. 

Now, we focus on the minimum error discrimination among a specific class of pure states, called \textit{symmetric pure states}. 
\begin{definition}
\cite{a.chefles} For a positive integer $N$, the distinct pure states $|\psi_1\rangle,\cdots,|\psi_N\rangle$ are called \textit{symmetric},
if there exists a unitary operator $V$ such that 
\begin{equation}
|\psi_x\rangle=V^{x-1}|\psi_1\rangle 
\label{symuni}
\end{equation}
for $x= 1,2,\cdots,N$ and
\begin{equation}
V^N=\mathbb{I},
\end{equation}
where $\mathbb{I}$ is an identity operator on a subspace spanned by $\{|\psi_1\rangle,\cdots,|\psi_N\rangle\}$.
\end{definition}

The Gram matrix composed of the symmetric pure states in Definition 1 is 
\begin{equation}
G=\left(\langle\psi_x|\psi_y\rangle\right)_{x,y=1}^{N}.\label{gram}
\end{equation}
From a straightforward calculation, the eigenvalues of the Gram matrix in Eq. (\ref{gram}) are in forms of
\begin{equation}
\lambda_p=\sum_{k=1}^{N}\langle\psi_j|\psi_k\rangle e^{-\frac{2\pi i (p-1)(j-k)}{N}}, \ \ p=1,2,\cdots,N,\label{lambda}
\end{equation}
for any choice of $j\in\{1,2,\cdots,N\}$. We note that the set $\{\lambda_p\}_{p=1}^{N}$ is invariant under the choice of $j$ due to the symmetry of the pure state $\{|\psi_1\rangle,\cdots,|\psi_N\rangle\}$. The following proposition provides the maximal success probability of the minimum error discrimination among the symmetric pure states in Definition 1. 
\begin{proposition}
\cite{c.w.helstrom} Let $\mathcal{E}_{sym}$ be an equiprobable ensemble of symmetric pure states $|\psi_1\rangle,\cdots,|\psi_N\rangle$. Then, the maximal success probability is given as
\begin{equation}
P_{hel}(\mathcal{E}_{sym})=\frac{1}{N^2}\left(\sum_{p=1}^{N}\sqrt{\lambda_p}\right)^2,\label{Helstrom_bound}
\end{equation}
where $\lambda_p$ are the eigenvalues of the Gram matrix composed of $\{|\psi_1\rangle,\cdots,|\psi_N\rangle\}$ in Eq. (\ref{gram}).
\end{proposition}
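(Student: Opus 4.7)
The plan is to exploit the cyclic symmetry generated by $V$ in two ways. On the one hand, it turns the Gram matrix $G$ into a Hermitian circulant matrix whose discrete-Fourier-transform diagonalization reproduces the eigenvalues $\lambda_p$ stated in Eq.~(\ref{lambda}); on the other hand, it guarantees that an optimal POVM may be sought with the same cyclic structure $M_x=V^{x-1}M_1(V^{\dagger})^{x-1}$, which in the pure-state case is most naturally realized by the square-root measurement (SRM).

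I would begin by introducing the frame operator $\Phi=\sum_{x=1}^{N}|\psi_x\rangle\langle\psi_x|$. Writing $T=[|\psi_1\rangle,\ldots,|\psi_N\rangle]$, the identities $G=T^{\dagger}T$ and $\Phi=TT^{\dagger}$ show that the nonzero spectra of $G$ and $\Phi$ coincide, so in particular $\mathrm{tr}(\Phi^{1/2})=\sum_{p=1}^{N}\sqrt{\lambda_p}$. The SRM is then defined by $|\pi_x\rangle=\Phi^{-1/2}|\psi_x\rangle$ and $M_x=|\pi_x\rangle\langle\pi_x|$. Since $V\Phi V^{\dagger}=\Phi$, one checks $|\pi_x\rangle=V^{x-1}|\pi_1\rangle$, and $\sum_x M_x=\Phi^{-1/2}\Phi\Phi^{-1/2}$ is the identity on $\mathrm{supp}(\Phi)$, giving a valid POVM on the relevant subspace.

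The success probability then falls out cleanly. Each overlap $\langle\pi_x|\psi_x\rangle=\langle\psi_x|\Phi^{-1/2}|\psi_x\rangle$ is a positive real number independent of $x$ by symmetry, so
\begin{equation*}
N\langle\psi_1|\Phi^{-1/2}|\psi_1\rangle=\sum_{x=1}^{N}\langle\psi_x|\Phi^{-1/2}|\psi_x\rangle=\mathrm{tr}(\Phi^{1/2})=\sum_{p=1}^{N}\sqrt{\lambda_p}.
\end{equation*}
Substituting into $P_s=\tfrac{1}{N}\sum_x|\langle\pi_x|\psi_x\rangle|^2=\left(\langle\psi_1|\Phi^{-1/2}|\psi_1\rangle\right)^{2}$ yields precisely $\tfrac{1}{N^{2}}\left(\sum_p\sqrt{\lambda_p}\right)^{2}$.

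The remaining step, and the one I expect to be the real technical obstacle, is to certify that this value is actually the maximum via the necessary and sufficient inequality (\ref{inequality_condition}). A short symmetry-based computation gives $K:=\sum_z q_z\rho_z M_z=\tfrac{c}{N}\Phi^{1/2}$ with $c=\langle\psi_1|\Phi^{-1/2}|\psi_1\rangle$, so $K$ is Hermitian and commutes with $V$. The condition $K-q_x|\psi_x\rangle\langle\psi_x|\ge 0$ can then be checked by conjugating with $\Phi^{-1/4}$ on $\mathrm{supp}(\Phi)$: the inequality collapses to $c\,\mathbb{I}\ge\Phi^{-1/4}|\psi_x\rangle\langle\psi_x|\Phi^{-1/4}$, a rank-one statement whose nonzero eigenvalue is $\langle\psi_x|\Phi^{-1/2}|\psi_x\rangle=c$. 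Thus the Lagrangian condition is saturated with equality, which both proves the SRM is optimal and makes clear that the equiprobability hypothesis is essential for this reduction to go through.
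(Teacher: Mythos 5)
Your proof is correct. Note, however, that the paper does not prove this proposition at all: it is stated as a known result and attributed to Helstrom's book, so there is no in-paper argument to compare against. What you have written is essentially the standard square-root-measurement proof from the literature: the identification of the nonzero spectra of $G=T^{\dagger}T$ and $\Phi=TT^{\dagger}$ gives $\mathrm{tr}(\Phi^{1/2})=\sum_{p}\sqrt{\lambda_p}$; the covariance $V\Phi V^{\dagger}=\Phi$ makes $\langle\psi_x|\Phi^{-1/2}|\psi_x\rangle$ independent of $x$, which yields the value $\frac{1}{N^2}(\sum_p\sqrt{\lambda_p})^2$; and the verification of the optimality condition (\ref{inequality_condition}) via $K=\frac{c}{N}\Phi^{1/2}$ and conjugation by $\Phi^{-1/4}$ is sound, since the rank-one operator $\Phi^{-1/4}|\psi_x\rangle\langle\psi_x|\Phi^{-1/4}$ has top eigenvalue exactly $c$. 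Two small points of care: all operators ($\Phi^{-1/2}$, $\Phi^{-1/4}$, the POVM completeness relation) must be understood on $\mathrm{supp}(\Phi)$, with the POVM completed arbitrarily on the orthogonal complement where condition (\ref{inequality_condition}) holds trivially; and the phrase ``saturated with equality'' should be read as saying the operator inequality $c\,\mathbb{I}\ge|v\rangle\langle v|$ holds with equality only along the direction of $|v\rangle$, not as an operator identity. Neither point affects the validity of the argument.
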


Eq. (\ref{Helstrom_bound}) is also called \textit{Helstrom bound}, and $1-P_{hel}(\mathcal{E}_{sym})$ is called \textit{minimum error probability}.

\section{Optimal Communication with \\ Phase Shift Keying (PSK) Signal}
In quantum optical communication, phase shift keying (PSK) signal is expressed as equiprobable symmetric pure states \cite{g.cariolaro}. In this section, we derive the maximal success probability of the quantum communication with PSK signal composed of generalized coherent states. First, we provide definition of the generalized coherent state which encapsulates standard coherent state (S-CS) and non-standard coherent state (NS-CS) as special cases.
\begin{definition}
\cite{e.m.f.curado} If a pure state takes the form
\begin{equation}
|\alpha,\vec{h}\rangle=\sum_{n=0}^{\infty}\alpha^nh_n(|\alpha|^2)|n\rangle, \ \ \alpha\in\mathbb{C},\label{generalized_coherent_state}
\end{equation}
where $\{|n\rangle|n\in\mathbb{Z}^+\cup\{0\}\}$ is Fock basis and $\vec{h}$ is a tuple of real-valued functions $h_n:[0,R^2]\rightarrow\mathbb{R}$ satisfying
\begin{eqnarray}
&&\sum_{n=0}^{\infty}u^n\left\{h_n(u)\right\}^2=1,\\
&&\sum_{n=0}^{\infty}nu^n\left\{h_n(u)\right\}^2 \\
&& \ \ \ \ \ \ \ \ \textrm{ is a strictly increasing function of }u,\nonumber\\
&&\int_0^{R^2}duw(u)u^n\left\{h_n(u)\right\}^2=1 \\
&& \ \ \ \ \ \ \ \  \textrm{ for a real-valued function }w:[0,R^2]\rightarrow\mathbb{R}^+,\nonumber
\end{eqnarray} 
then the pure state is called \textit{generalized coherent state}. If every real-valued function $h_n$ in Eq. (\ref{generalized_coherent_state}) takes the form
\begin{equation}
h_n(u)=\frac{1}{\sqrt{n!}}e^{-\frac{1}{2}u}, \ \ \forall n\in\mathbb{Z}^+\cup\{0\},\label{special_case}
\end{equation}
then Eq. (\ref{generalized_coherent_state}) is called \textit{standard coherent state} (S-CS) \cite{r.j.glauber}. Otherwise, Eq. (\ref{generalized_coherent_state}) is called \textit{non-standard coherent state} (NS-CS).
\end{definition}

Several examples of NS-CS have been introduced such as optical spin coherent state (OS-CS) \cite{a.m.perelomov}, Perelomov coherent state (P-CS) \cite{a.m.perelomov}, Barut-Girardello coherent state (BG-CS) \cite{a.o.barut} and modified Susskind-Glogower coherent state (mSG-CS) \cite{j.-p.gazeau}.
\begin{example}
For a given non-negative integer $\widetilde{n}$, if $h_n$ takes the form 
\begin{eqnarray}
h_n(u)=\sqrt{\frac{\widetilde{n}!}{n!(\widetilde{n}-n)!}}(1+u)^{-\frac{\widetilde{n}}{2}},
\end{eqnarray}
for $0\le n \le \widetilde{n}$ and $h_n(u)=0$ for $n>\widetilde{n}$, then the generalized coherent state in Eq. (\ref{generalized_coherent_state}) is called \textit{optical spin coherent state} (OS-CS).
\end{example}
\begin{example} 
For all non-negative integer $n$ and a real number $\varsigma$ with $\varsigma\ge1/2$, if $h_n$ takes the form 
\begin{eqnarray}
h_n(u)=\frac{1}{\mathcal{N}(u)}\sqrt{\frac{\Gamma(2\varsigma)}{n!\Gamma(2\varsigma+n)}},
\end{eqnarray}
then the generalized coherent state in Eq.(\ref{generalized_coherent_state}) is called \textit{Barut-Girardello coherent state} (BG-CS). Here, $\Gamma$ is the Gamma function of the first kind and $\mathcal{N}(u)$ is a normalization factor
\begin{equation}
\mathcal{N}(u)=\Gamma(2\varsigma)u^{1/2-u}I_{2\varsigma-1}(2\sqrt{u}),
\end{equation}
where $I_\nu$ is the modified Bessel function of the first kind.
\end{example}
\begin{example}
For all non-negative integer $n$, if $h_n$ takes the form 
\begin{eqnarray}
h_n(u)=\sqrt{\frac{n+1}{\mathcal{\bar{N}}(u)}}\frac{1}{u^{\frac{n+1}{2}}}J_{n+1}(2\sqrt{u}),
\end{eqnarray}
then the generalized coherent state in Eq. (\ref{generalized_coherent_state}) is called \textit{modified Susskind-Glogower coherent state} (mSG-CS). Here, $J_n$ is the Bessel function of the first kind and $\mathcal{\bar{N}}(u)$ is a normalization factor
\begin{eqnarray}
\mathcal{\bar{N}}(u)&=&\frac{1}{u}\Big[2u\left\{J_0(2\sqrt{u})\right\}^2\\
&-&\sqrt{u}J_0(2\sqrt{u})J_1(2\sqrt{u})+2u\left\{J_1(2\sqrt{u})\right\}^2\Big].\nonumber
\end{eqnarray}
\end{example}
\begin{example}
For all non-negative integer $n$, $\varsigma$ and an integer or half-integer with $\varsigma\ge1/2$, if $h_n$ takes the form 
\begin{eqnarray}
h_n(u)=\sqrt{\frac{(2\varsigma-1+n)!}{n!(2\varsigma-1)!}}(1-u)^{\varsigma},
\end{eqnarray}
then the generalized coherent state in Eq. (\ref{generalized_coherent_state}) is called \textit{Perelomov coherent state} (P-CS). 
\end{example}

We mainly focus on which NS-CS provided in the examples can give the advantage to the $N$-ary PSK quantum communication. For this reason, we define the $N$-ary generalized PSK ($N$-GPSK) signal as follows.
\begin{definition}
If an equiprobable ensemble $\mathcal{E}_{gcs}$ consists of generalized coherent states, 
\begin{equation}
\left\{|\alpha_x,\vec{h}\rangle|x\in\{1,2,\cdots,N\}\right\},\label{PSK}
\end{equation}
where $N\in\mathbb{Z}^+$ and $\alpha_x\in\mathbb{C}$ such that
\begin{equation}
\alpha_x=\alpha e^{\frac{2\pi i}{N}x},\label{alpha_x}
\end{equation}
with a non-negative integer $\alpha$, then the ensemble $\mathcal{E}_{gcs}$ is called $N$-ary generalized PSK ($N$-GPSK) signal. 
\end{definition}

Moreover, $N$-GPSK signal is called $N$-ary standard PSK ($N$-SPSK) signal \cite{g.cariolaro} if every coherent state in Eq. (\ref{PSK}) is S-CS, and $N$-PSK signal is called $N$-ary non-standard PSK ($N$-NSPSK) signal if every coherent state in Eq. (\ref{PSK}) is NS-CS. The following theorem shows that the generalized coherent states in Definition 3 are symmetric.
\begin{theorem}
For given distinct generalized coherent states $|\alpha_1,\vec{h}\rangle,\cdots,|\alpha_N,\vec{h}\rangle$, there exists a unitary operator $U$ such that
\begin{eqnarray}
|\alpha_x,\vec{h}\rangle=U^{x-1}|\alpha_1,\vec{h}\rangle, \ \ \forall x\in\{1,2,\cdots,N\},\label{thm1}
\end{eqnarray}
for $x=1,2,\cdots,N$ and
\begin{equation}
U^N=\mathbb{I},\label{U_N}
\end{equation}
where $\mathbb{I}$ is an identity operator on a subspace spanned by $\{|\alpha_1,\vec{h}\rangle,\cdots,|\alpha_N,\vec{h}\rangle\}$.
\end{theorem}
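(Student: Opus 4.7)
The plan is to exhibit an explicit unitary $U$, diagonal in the Fock basis, that rotates each number state $|n\rangle$ by a phase proportional to $n$, and then to verify Eqs.~(\ref{thm1}) and (\ref{U_N}) by direct computation from the Fock-expansion in Eq.~(\ref{generalized_coherent_state}). Concretely, I would set
\begin{equation}
U=\sum_{n=0}^{\infty}e^{\frac{2\pi i n}{N}}|n\rangle\langle n|,
\end{equation}
which is manifestly unitary because it is diagonal with unit-modulus eigenvalues; equivalently, $U=\exp(2\pi i \hat{N}/N)$ where $\hat{N}$ is the number operator.

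The main step is then to check that iterated action of $U$ on $|\alpha_1,\vec{h}\rangle$ reproduces $|\alpha_x,\vec{h}\rangle$. Using $\alpha_x=\alpha e^{2\pi i x/N}$ from Eq.~(\ref{alpha_x}), I would write
\begin{equation}
U^{x-1}|\alpha_1,\vec{h}\rangle=\sum_{n=0}^{\infty}e^{\frac{2\pi i n(x-1)}{N}}\alpha^n e^{\frac{2\pi i n}{N}}h_n(|\alpha|^2)|n\rangle=\sum_{n=0}^{\infty}\alpha_x^n h_n(|\alpha_x|^2)|n\rangle,
\end{equation}
where the last equality uses $|\alpha_x|^2=|\alpha|^2$, which exploits the key fact that the $h_n$ in Definition~3 depend only on the modulus $|\alpha|^2$. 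This recovers $|\alpha_x,\vec{h}\rangle$ exactly, establishing Eq.~(\ref{thm1}).

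For Eq.~(\ref{U_N}), I would note that $U^N=\sum_n e^{2\pi i n}|n\rangle\langle n|=\sum_n |n\rangle\langle n|$, which equals the identity on the full Fock space and hence, in particular, on the subspace $\mathrm{span}\{|\alpha_1,\vec{h}\rangle,\dots,|\alpha_N,\vec{h}\rangle\}$. No obstacle is expected in either step: the only nontrivial ingredient is the recognition that the phase structure of $\alpha_x$ factors through the number operator, which is essentially the defining feature that makes PSK signals symmetric in the sense of Definition~1. The result therefore follows immediately from the Fock-basis representation, with no need to invoke any property of $\vec{h}$ beyond the modulus-dependence already built into Eq.~(\ref{generalized_coherent_state}).
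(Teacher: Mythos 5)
Your proposal is correct and follows essentially the same route as the paper: both use the number-operator phase unitary $U=e^{2\pi i a^\dagger a/N}$ (your diagonal Fock-basis form is identical) and verify Eq.~(\ref{thm1}) from the Fock expansion together with $|\alpha_x|=\alpha$, the paper doing one step $U|\alpha_x,\vec h\rangle=|\alpha_{x+1},\vec h\rangle$ followed by induction while you compute $U^{x-1}$ directly. No gaps.
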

\begin{proof}
Consider a unitary operator
\begin{eqnarray}
U=e^{\frac{2\pi i}{N}a^\dagger a},\label{U}
\end{eqnarray}
where $a$ and $a^\dagger$ are the annihilation and creation operators satisfying
\begin{eqnarray}
&&a|n\rangle=\sqrt{n}|n-1\rangle, \ \ \forall n\in\mathbb{Z}^+,\\
&&a^\dagger|n\rangle=\sqrt{n+1}|n+1\rangle, \ \ \forall n\in\mathbb{Z}^+\cup\{0\},
\end{eqnarray}
respectively. It is straightforward to show that the unitary operator $U$ in Eq. (\ref{U}) satisfies Eq. (\ref{U_N}). 

We also note that
\begin{eqnarray}
U|n\rangle=e^{\frac{2\pi i}{N}n}|n\rangle, 
\end{eqnarray}
for any non-negative integer $n$, therefore we have that 
\begin{eqnarray}
U|\alpha_x,\vec{h}\rangle&=&\sum_{n=0}^{\infty}\alpha_x^nh_n(|\alpha_x|^2)e^{\frac{2\pi i}{N}a^\dagger a}|n\rangle\nonumber\\
&=&\sum_{n=0}^{\infty}\alpha_x^nh_n(|\alpha_x|^2)e^{\frac{2\pi i}{N}n}|n\rangle\nonumber\\
&=&\sum_{n=0}^{\infty}(\alpha_xe^{\frac{2\pi i}{N}})^nh_n(|\alpha_x|^2)|n\rangle,\label{proof}
\end{eqnarray}
for every $x\in\{1,2,\cdots,N-1\}$. Moreover, Eq. (\ref{alpha_x}) leads us to
\begin{eqnarray}
\alpha_xe^{\frac{2\pi i}{N}}=\alpha_{x+1},\label{prop_alpha1}
\end{eqnarray}
for $x\in\{1,2,\cdots,N-1\}$ and
\begin{eqnarray}
|\alpha_x|=\alpha,\label{prop_alpha2}
\end{eqnarray}
for $x\in\{1,2,\cdots,N\}$. From Eqs. (\ref{proof}), (\ref{prop_alpha1}) and (\ref{prop_alpha2}), we have
\begin{eqnarray}
U|\alpha_x,\vec{h}\rangle=\sum_{n=0}^{\infty}(\alpha_{x+1})^nh_n(|\alpha_{x+1}|^2)|n\rangle=|\alpha_{x+1},\vec{h}\rangle.\nonumber\\
\label{finally_proved}
\end{eqnarray}
Eq. (\ref{thm1}) can be shown by an inductive use of Eq. (\ref{finally_proved}), which completes the proof.
\end{proof}

Theorem 1 means that the Helstrom bound of quantum communication with $N$-GPSK signal is given by Eq. (\ref{Helstrom_bound}) in Proposition 1, which is encapsulated in the following theorem. 

\begin{theorem}
The Helstrom bound of $N$-GPSK signal is given by
\begin{eqnarray}
P_{hel}(\mathcal{E}_{gcs})=\frac{1}{N^2}\left(\sum_{p=1}^{N}\sqrt{\lambda_p^{(G)}}\right)^2,\label{helstrom_bound_N}
\end{eqnarray}
where $\lambda_p^{(G)}$ takes the form of
\begin{eqnarray}
\lambda_p^{(G)}=\sum_{k=0}^{M-1}\left[\sum_{n=0}^{\infty}\alpha^{2n}\cos\left\{\frac{2\pi}{N}k(n+p-1)\right\}\left\{h_n(\alpha^2)\right\}^2\right].\nonumber\\
\label{lambda_N}
\end{eqnarray}
for every $p\in\{1,2,\cdots,N\}$.
\end{theorem}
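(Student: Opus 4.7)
The plan is to invoke Proposition 1 via Theorem 1, and then evaluate the Gram-matrix eigenvalues in closed form. By Theorem 1, the $N$-GPSK states $|\alpha_1,\vec h\rangle,\dots,|\alpha_N,\vec h\rangle$ are symmetric in the sense of Definition 1, and the ensemble $\mathcal{E}_{gcs}$ is equiprobable by hypothesis. Hence Proposition 1 applies verbatim and gives the outer form $P_{hel}(\mathcal{E}_{gcs})=\frac{1}{N^2}(\sum_p\sqrt{\lambda_p^{(G)}})^2$ of Eq.~(\ref{helstrom_bound_N}). The only real content of the theorem is therefore an explicit evaluation of the eigenvalues $\lambda_p^{(G)}$ of the Gram matrix $G=(\langle\alpha_x,\vec h|\alpha_y,\vec h\rangle)_{x,y=1}^{N}$.

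First I would compute a single inner product. Using Definition 2 and the fact that Eq.~(\ref{alpha_x}) implies $|\alpha_j|=|\alpha_k|=\alpha$ and $\alpha_j^{*}\alpha_k=\alpha^{2}e^{2\pi i(k-j)/N}$, a direct term-by-term expansion of the two Fock-space series gives
\begin{equation}
\langle\alpha_j,\vec h|\alpha_k,\vec h\rangle=\sum_{n=0}^{\infty}\alpha^{2n}\,e^{2\pi i n(k-j)/N}\,\{h_n(\alpha^2)\}^{2}.
\end{equation}
Next I would substitute this into Eq.~(\ref{lambda}) from Proposition 1 with any fixed $j$, and interchange the (now finite $\times$ infinite) order of summation. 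The interchange is justified by the normalization condition $\sum_{n}\alpha^{2n}\{h_n(\alpha^2)\}^{2}=1$ in Definition 2, which guarantees absolute convergence uniformly in $k$. The result is
\begin{equation}
\lambda_p^{(G)}=\sum_{n=0}^{\infty}\alpha^{2n}\{h_n(\alpha^2)\}^{2}\sum_{k=1}^{N}e^{2\pi i(k-j)(n+p-1)/N}.
\end{equation}

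Finally I would simplify the inner exponential sum. By $N$-periodicity of $m\mapsto e^{2\pi i m(n+p-1)/N}$, the sum over $k=1,\dots,N$ equals the sum over $m=0,\dots,N-1$, and the latter is a real number because its imaginary part $\sum_{m=0}^{N-1}\sin(2\pi m(n+p-1)/N)$ vanishes by the symmetry of $\sin$ over a complete period. Taking real parts, each $e^{2\pi i m(n+p-1)/N}$ may be replaced by $\cos\{2\pi m(n+p-1)/N\}$, producing exactly Eq.~(\ref{lambda_N}) (with the outer index $k$ running over $0,\dots,N-1$). Combined with the prefactor from Proposition 1, this finishes the proof.

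I do not expect any deep obstacle: the structural work is already done in Theorem 1 and Proposition 1, and the remaining steps are careful bookkeeping of the phase factors, a Fubini-type interchange that is easy to justify from the normalization in Definition 2, and the standard observation that the Gram matrix being Hermitian forces its eigenvalues to be real, allowing the reduction from complex exponentials to cosines.
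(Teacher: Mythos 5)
Your proposal is correct and follows essentially the same route as the paper: compute the inner product $\langle\alpha_j,\vec h|\alpha_k,\vec h\rangle$, insert it into Eq.~(\ref{lambda}), reindex the outer sum to $k=0,\dots,N-1$, use the realness of the Gram-matrix eigenvalues to pass from complex exponentials to cosines, and invoke Theorem 1 together with Proposition 1 for the overall bound. The only cosmetic differences are that you explicitly justify the Fubini interchange and offer the vanishing-sine-sum argument alongside the Hermiticity argument, and you correctly read the upper limit $M-1$ in Eq.~(\ref{lambda_N}) as the intended $N-1$.
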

\begin{proof}
For every $j,k\in\{1,2,\cdots,N\}$, the inner product $\langle\alpha_j,\vec{h}|\alpha_k,\vec{h}\rangle$ is
\begin{eqnarray}
\langle\alpha_j,\vec{h}|\alpha_k,\vec{h}\rangle=\sum_{n=0}^{\infty}\left\{\alpha^2e^{i\frac{2\pi}{N}(k-j)}\right\}^n\left\{h_n(\alpha^2)\right\}^2.\label{inner_product_NS_CS}
\end{eqnarray}
From Eq. (\ref{inner_product_NS_CS}) together with Eq. (\ref{lambda}), $\lambda_{p}^{(G)}$ is also obtained by
\begin{eqnarray}
&&\lambda_p^{(G)}\nonumber\\
&&=\sum_{k=1}^{N}\left[\sum_{n=0}^{\infty}\left\{\alpha^2e^{i\frac{2\pi}{N}(k-j)}\right\}^n\left\{h_n(\alpha^2)\right\}^2\right]e^{-\frac{2\pi i (p-1)(j-k)}{N}}\nonumber\\
&&=\sum_{k=1}^{N}\left[\sum_{n=0}^{\infty}\alpha^{2n}e^{i\frac{2\pi}{N}(k-j)(n+p-1)}\left\{h_n(\alpha^2)\right\}^2\right].
\end{eqnarray}
As mentioned before, the set $\{\lambda_p^{(G)}\}_{p=1}^{N}$ is invariant under the choice of $j\in\{1,2,\cdots,N\}$. By choosing $j=1$ and substituting $k$ to $k-1$, $\lambda_p^{(G)}$ can be rewritten by
\begin{eqnarray}
\lambda_p^{(G)}=\sum_{k=0}^{N-1}\left[\sum_{n=0}^{\infty}\alpha^{2n}e^{i\frac{2\pi}{N}k(n+p-1)}\left\{h_n(\alpha^2)\right\}^2\right].\label{lambda_exp}
\end{eqnarray}
Since the Gram matrix is Hermitian, $\lambda_p^{(G)}$ is a real number. Thus, by using the relation
\begin{equation}
\lambda_p^{(G)}=\frac{\lambda_p^{(G)}+\lambda_p^{(G)*}}{2},
\end{equation}
together with Eq. (\ref{lambda_exp}), we have Eq. (\ref{lambda_N}). Due to Theorem 1, every generalized coherent state in $N$-GPSK signal is symmetric. Thus, Proposition 1 and Eq. (\ref{lambda_N}) lead us to the Helstrom bound in Eq. (\ref{helstrom_bound_N}).
\end{proof} 

\section{Sub-Poissonianity of NS-CS and the Helstrom bound}
For $N=3,4$ and 8, we provide illustrative results of the Helstrom bound of $N$-NSPSK signal of Eq. (\ref{helstrom_bound_N}) in case of OS-CS, P-CS, BG-CS and mSG-CS. We also compare these results with the case of $N$-SPSK signal.

\subsection{Optical Spin Coherent States (OS-CS)}
\begin{figure*}
\centering
\includegraphics[scale=0.29]{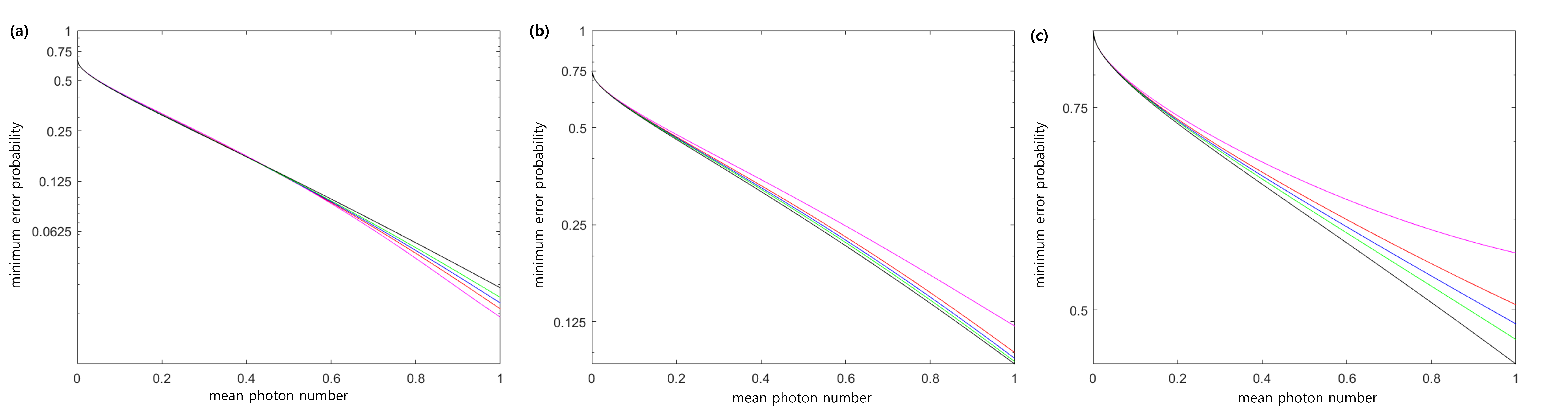}
\caption{The minimum error probabilities of $N$-SPSK signal and $N$-NSPSK signal composed of OS-CS , where (a), (b) and (c) show the case of $N=$3, 4 and 8, respectively. In these figures, purple, red, blue and green lines show the case of $N$-NSPSK signal with $\widetilde{n}=3$, $\widetilde{n}=5$, $\widetilde{n}=7$ and $\widetilde{n}=11$, respectively. Black lines in the figures show the case of $N$-SPSK signal. }
\end{figure*}
\begin{figure*}
\centering
\includegraphics[scale=0.29]{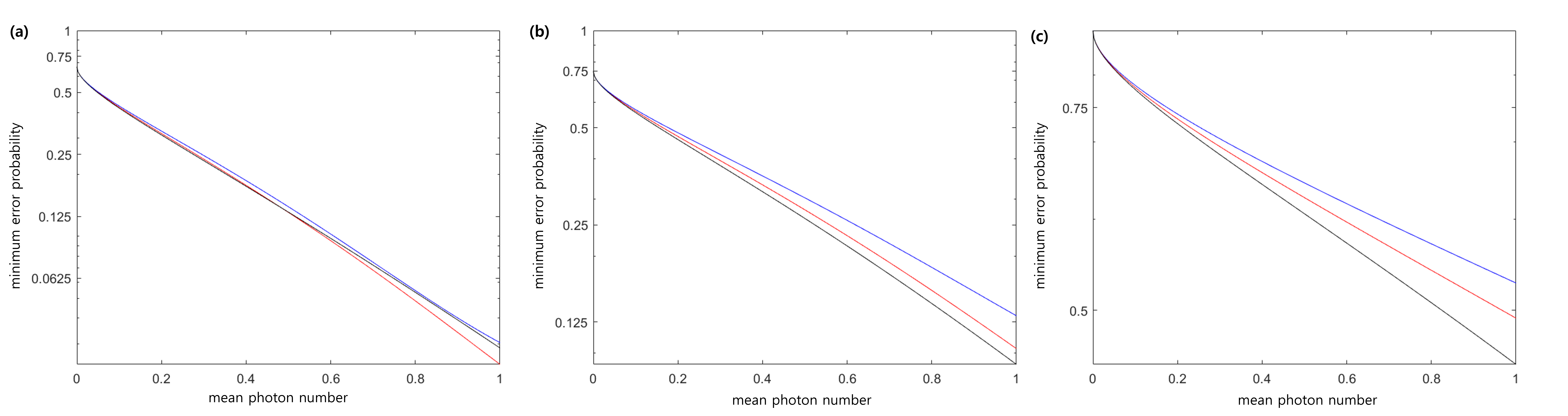}
\caption{The minimum error probabilities of $N$-SPSK signal and $N$-NSPSK signal composed of BG-CS, where (a), (b) and (c) shows the case of $N=$3, 4 and 8, respectively. In these figures, blue and red lines show the case of $N$-NSPSK signal with $\varsigma=0.5$ and $\varsigma=1.5$, respectively. Black lines show the case of $N$-SPSK. }
\end{figure*}
The minimum error probabilities of $N$-SPSK signal and $N$-NSPSK signal composed of OS-CS are illustrated in Fig. 1, where Fig. 1(a), (b) and (c) show the case of $N=$3, 4 and 8, respectively. In these figures, purple, red, blue and green lines show the case of $N$-NSPSK signal with $\widetilde{n}=3$, $\widetilde{n}=5$, $\widetilde{n}=7$ and $\widetilde{n}=11$, respectively. Black lines in the figures show the case of $N$-SPSK signal. 

In Fig. 1(a), the minimum error probabilities of 3-NSPSK signal composed of OS-CS is smaller than that of 3-SPSK signal when mean photon number is large ($\langle n\rangle>0.45$, $\langle n\rangle>0.42$, $\langle n\rangle>0.38$ and $\langle n\rangle>0.37$ in case of $\widetilde{n}=3$, $\widetilde{n}=5$, $\widetilde{n}=7$ and $\widetilde{n}=11$, respectively). In other words, \textit{3-PSK quantum communication can be enhanced by a non-standard coherent state using OS-CS.} However,  in Fig. 1(b), each minimum error probability of 4-NSPSK signal is larger than that of 4-SPSK signal for arbitrary mean photon number. This aspect repeatedly happens in Fig. 1(c) where 8-NSPSK signal is considered. These results imply that \textit{4-PSK and 8-PSK quantum communication cannot be enhanced by OS-CS}.

\subsection{Barut-Girardello Coherent States (BG-CS)}
The minimum error probabilities of $N$-SPSK signal and $N$-NSPSK signal composed of BG-CS are illustrated in Fig. 2, where Fig. 2(a), (b) and (c) shows the case of $N=$3, 4 and 8, respectively. In these figures, blue and red lines show the case of $N$-NSPSK signal with $\varsigma=0.5$ and $\varsigma=1.5$, respectively. Black lines show the case of $N$-SPSK. 

In Fig. 2(a), each minimum error probabilty of 3-NSPSK signal with $\varsigma=1.5$ is smaller than that of 3-SPSK signal when mean photon number is larger than 0.48. Meanwhile, each minimum error probabilty of 3-NSPSK signal with $\varsigma=0.5$ is larger than that of 3-SPSK signal for arbitrary mean photon number. \textit{Thus, enhancing 3-PSK quantum communication by non-standard coherent state using BG-CS depends on the parameter $\varsigma$.} However, in Fig. 2(b), each minimum error probability of $4$-NSPSK signal is larger than that of $4$-SPSK signal for arbitrary mean photon number. This aspect repeatedly happens in Fig. 2(c) where 8-NSPSK signal is considered. These results imply that \textit{4-PSK and 8-PSK quantum communication cannot be enhanced by BG-CS}.

\subsection{Modified Susskind-Glogower Coherent States (mSG-CS)}
\begin{figure*}
\centering
\includegraphics[scale=0.29]{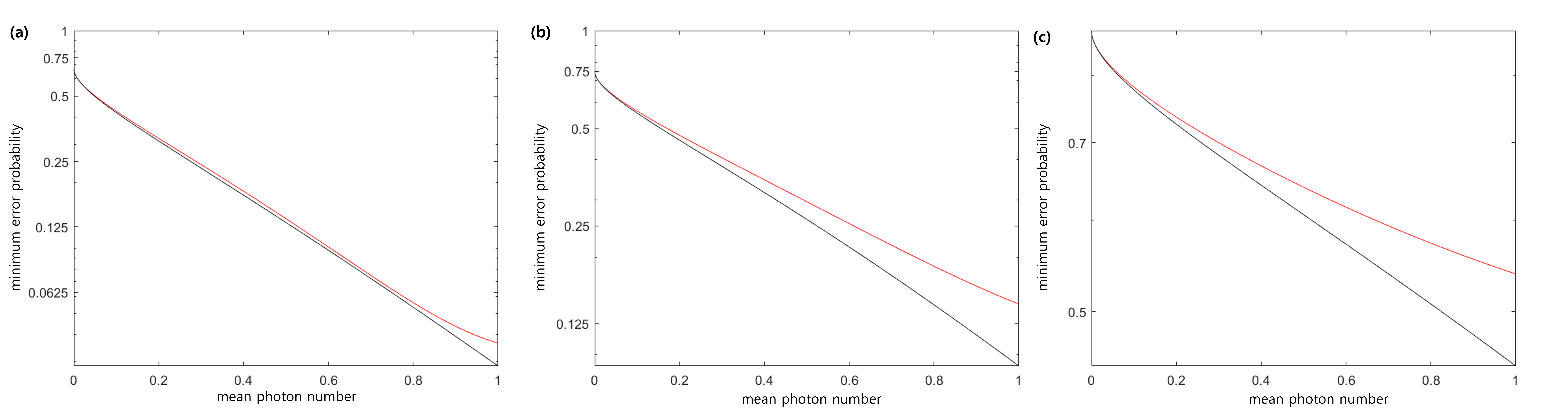}
\caption{The minimum error probabilities of $N$-SPSK signal and $N$-NSPSK signal composed of mSG-CS, where (a), (b) and (c) shows the case of $N=$3, 4 and 8, respectively. In these figures, red lines show the case of $N$-NSPSK signal and black lines show the case of $N$-SPSK. }
\end{figure*}
\begin{figure*}
\centering
\includegraphics[scale=0.29]{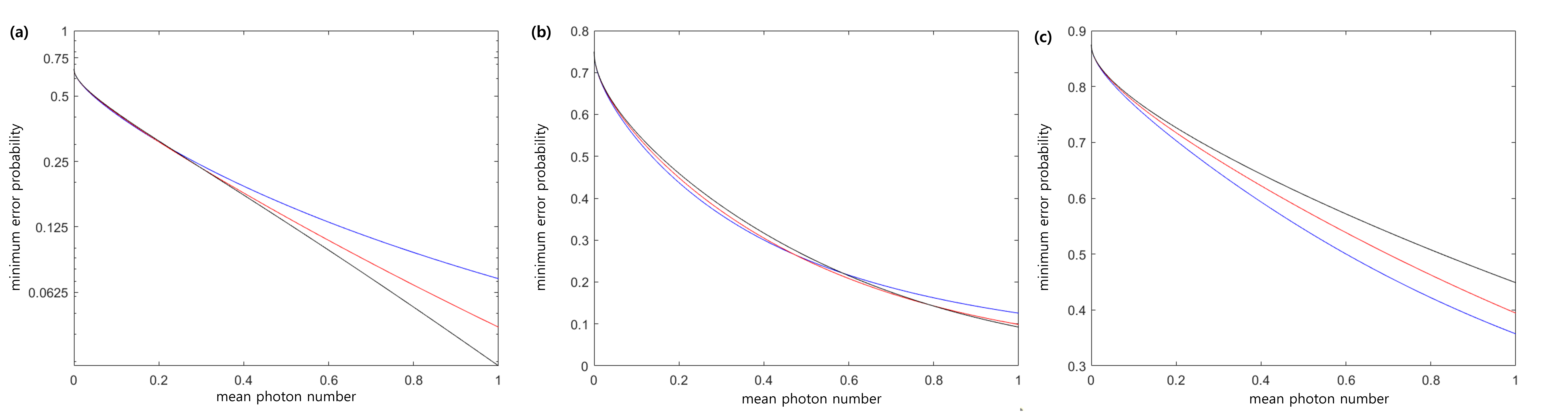}
\caption{The minimum error probabilities of $N$-SPSK signal and $N$-NSPSK signal composed of P-CS, where (a),  (b) and (c) shows the case of $N=$3, 4 and 8, respectively. In these figures, blue and red lines show the case of P-CS with $\varsigma=0.5$ and $\varsigma=1.5$, respectively. Black lines show the case of S-CS. }
\end{figure*}
The minimum error probabilities of $N$-SPSK signal and $N$-NSPSK signal composed of mSG-CS are illustrated in Fig. 3, where Fig. 3(a), (b) and (c) shows the case of $N=$3, 4 and 8, respectively. In these figures, red lines show the case of $N$-NSPSK signal and black lines show the case of $N$-SPSK. 

In Fig. 3, each minimum error probability of $N$-NSPSK signal is larger than that of $N$-SPSK signal for any $N=$3, 4 and 8 and any mean photon number. This result implies that \textit{3-PSK, 4-PSK, and 8-PSK quantum communication cannot be enhanced by mSG-CS.}  We also compare this result with the previous work about on-off keying signal \cite{e.m.f.curado}; it is known that the minimum error probability of on-off keying signal composed of mSG-CS has a singular point where the logarithm of the minimum error probability diverges to $-\infty$. This implies that the minimum error probability can achieve to zero. Unlike the result in the previous work \cite{e.m.f.curado}, the minimum error probabilities of 3, 4 and 8-NSPSK signal in Fig. 3 do not have such singular points.  

\subsection{Perelomov Coherent States (P-CS)}
The minimum error probabilities of $N$-SPSK signal and $N$-NSPSK signal composed of P-CS are illustrated in Fig. 4, where Fig. 4(a),  (b) and (c) shows the case of $N=$3, 4 and 8, respectively. In these figures, blue and red lines show the case of P-CS with $\varsigma=0.5$ and $\varsigma=1.5$, respectively. Black lines show the case of S-CS. 

In Fig. 4(a), each minimum error probabilty of 3-NSPSK signal composed of P-CS with $\varsigma=$0.5 and 1.5 is larger than that of 3-SPSK signal for arbitrary mean photon number. In other words, \textit{$3$-PSK quantum communication cannot be enhanced by non-standard coherent state using P-CS}. However, in Fig. 4(b), each minimum error probabilty of 4-NSPSK signal composed of P-CS is smaller than that of 4-SPSK signal when mean photon number is small ($\langle n\rangle<0.585$ and $\langle n\rangle<0.786$ in case of $\varsigma=0.5$ and $\varsigma=1.5$, respectively). This implies that \textit{4-PSK quantum communication can be enhanced by P-CS}. In Fig. 4(c), each minimum error probabilty of 8-NSPSK signal composed of P-CS is smaller than that of 8-SPSK signal for arbitrary mean photon number. This result is rather surprising since even super-Poissonianity in P-CS can enhance the 4-PSK and 8-PSK quantum communication unlike the binary case \cite{m.namkung}. We discuss the details in the next section.

\subsection{Mandel Parameter and $N$-NSPSK Quantum Communication}
It is known that sub-Poissonianity of non-classical light is one of the important statistical properties for improving Helstrom bound of binary quantum optical communication \cite{e.m.f.curado}. For this reason, we consider the following \textit{Mandel parameter},
\begin{equation}
Q_M^{(NS)}=\frac{(\Delta n)^2}{\langle n\rangle}-1,
\end{equation}
 where $\langle n\rangle$ is mean photon number and $\Delta n$ is standard deviation of the number of photons. It is known that if $Q_M^{(NS)}>0(<0)$, then the generalized coherent state is \textit{super-Poissonian}(\textit{sub-Poissonian}) \cite{l.mandel,r.short}. If $Q_M^{(NS)}=0$ (for example, S-CS), then the generalized coherent state is \textit{Poissonian}. Here, we consider the relation between the performance of the $N$-PSK quantum communication and the Mandel parameter.

\begin{enumerate}
\item In case of OS-CS, the Mandel parameter is analytically driven as \cite{e.m.f.curado}
\begin{eqnarray}
Q_M^{(OS)}=-\frac{\langle n\rangle}{\widetilde{n}},
\end{eqnarray}
which means that OS-CS is always sub-Poissonian. According to Fig. 1, we note that sub-Poissonianity of OS-CS does not always guarantee the enhancement of the $N$-PSK quantum communication.

\item In case of BG-CS, the Mandel parameter is analytically driven in terms of the Modified Bessel function of the first kind as \cite{e.m.f.curado}
\begin{eqnarray}
Q_M^{(BG)}=\alpha\left[\frac{I_{2\varsigma+1}(2\alpha)}{I_{2\varsigma}(2\alpha)}-\frac{I_{2\varsigma}(2\alpha)}{I_{2\varsigma-1}(2\alpha)}\right].
\end{eqnarray}
Since the inequality $\left\{I_{\nu+1}(x)\right\}^2\ge I_\nu(x)I_{\nu+2}(x)$ holds for every $x\ge 0$, $Q_M^{(BG)}$ is negative semidefinite. Therefore, BG-CS is always sub-Poissonian or Poissonian. Moreover, $Q_M^{(BG)}$ is known to be strictly negative for non-zero mean photon number \cite{e.m.f.curado}. Nevertheless, Fig. 2 shows that sub-Poissonianity of BG-CS does not always guarantee the enhancement of the $N$-PSK quantum communication.

\item Since the analytic form of the Mandel parameter of the mSG-CS Mandel parameter is too complex \cite{e.m.f.curado}, we do not introduce the analytic form here. According to the result of \cite{e.m.f.curado}, the Mandel parameter of mSG-CS is negative when the mean photon number is not too large. Nevertheless, Fig. 3 shows that the sub-Poissonianity of the mSG-CS cannot provide any advantage on the $N$-PSK quantum communication.

\item In case of P-CS, the Mandel parameter is analytically driven as \cite{e.m.f.curado}
\begin{eqnarray}
Q_M^{(P)}=\frac{\langle n\rangle}{2\varsigma},
\end{eqnarray}
which means that P-CS is super-Poissonian. However, Fig. 4 shows that P-CS can enhance the $N$-PSK quantum communication for $N=$3, 4 or 8. It is surprising since the super-Poissonianity of NS-CS can even enhance $N$-PSK quantum communication unlike the binary case.
\end{enumerate}

\section{Conclusion}
In the present article, we have considered the quantum communication with $N$-ary phase shift keying ($N$-PSK) signal for arbitrary an arbitrary positive integer $N>1$. By using NS-CS, we have analytically provided the Helstrom bound of the quantum communication with $N$-PSK. Unlike the binary case \cite{e.m.f.curado,m.namkung}, we have shown that even super-Poissonianity of NS-CS can improve the Helstrom bound of $N$-PSK quantum communication: The Helstrom bound can be improved by considering the sub-Poissonian NS-CS for $N=3$, meanwhile the super-Poissonian NS-CS can improve the Helstrom bound for $N=4$ and $N=8$.

Using $N$-PSK signal with $N>2$, we can achieve a better transmission rate per a signal pulse than that of binary-BPSK even if the receiver's measurement is slow \cite{i.a.burenkov}. On the other hands, the maximal success probability of discriminating a message encoded in $N$-PSK signal generally decreases as $N$ is getting large. Thus our results about the possible enhancement of the maximal success probability in $N$-PSK quantum communication by NS-CS is important and even necessary to design efficient quantum communication schemes.

In the present article, we have only considered PSK signal with equal prior probabilities, which is composed of symmetric pure states. However, it is interesting and even important to consider a non-equiprobable or asymmetric ensemble of NS-CS for several reasons: First, it is practically difficult to implement the PSK signal having perfect symmetry or equal prior probabilities. Moreover, in discriminating three non-equiprobable and asymmetric pure states, there is possibility that sub-Poissonianity of non-classical light can enhance the Helstrom bound. We note that it is also interesting to consider unambiguous discrimination \cite{i.d.ivanovic,d.dieks,a.peres,g.jaeger,s.pang,j.a.bergou2} of NS-CS since this strategy can provide us with better confidence than the minimum error discrimination.

\section{Acknowledgement}
This work was supported by Quantum Computing Technology Development Program (NRF2020M3E4A1080088)
through the National Research Foundation of Korea (NRF) grant funded by the Korea government (Ministry of
Science and ICT).

\end{document}